\newcounter{AlgorithmOne}[algorithm]
\newtheorem{proposition}{Proposition}
\title[Tabling, Rational Terms, and Coinduction Finally Together!]
      {Tabling, Rational Terms, and Coinduction\\Finally Together!}
\author[T. Mantadelis, R. Rocha and P. Moura]
       {THEOFRASTOS MANTADELIS, RICARDO ROCHA and PAULO MOURA\\
       CRACS \& INESC TEC, Faculty of Sciences, University of Porto\\
       Rua do Campo Alegre, 1021/1055, 4169-007 Porto, Portugal\\
       \email{\{theo.mantadelis,ricroc\}@dcc.fc.up.pt,pmoura@inescporto.pt}
      }
\begin{document}

\label{firstpage}

\maketitle


\begin{abstract}
Tabling is a commonly used technique in logic programming for avoiding cyclic behavior of logic programs and enabling more declarative program definitions. Furthermore, tabling often improves computational performance. Rational term are terms with one or more infinite sub-terms but with a finite representation. Rational terms can be generated in Prolog by omitting the occurs check when unifying two terms. Applications of rational terms include definite clause grammars, constraint handling systems, and coinduction.
In this paper, we report our extension of YAP's Prolog tabling mechanism to support rational terms. We describe the internal representation of rational terms within the table space and prove its correctness. We then use this extension to implement a tabling based approach to coinduction. We compare our approach with current coinductive transformations and describe the implementation. In addition, we present an algorithm that ensures a canonical representation for rational terms.
\end{abstract}

\begin{keywords}
Tabling, Rational Terms, Coinduction, Implementation.
\end{keywords}


\section{Introduction}

Tabling~\cite{Chen-96} is a recognized and powerful implementation
technique that solves some limitations of Prolog's operational
semantics in dealing with left-recursion and redundant
sub-computations. Tabling based models are able to reduce the search
space, avoid looping, and always terminate for programs satisfying the
\emph{bounded term-size property}\footnote{A logic program has the
  bounded term-size property if there is a function $f:N \rightarrow
  N$ such that whenever a query goal \emph{Q} has no argument whose
  term size exceeds \emph{n}, then no term in the derivation of
  \emph{Q} has size greater than $f(n)$.}. Tabling consists of saving
and reusing the results of sub-computations during the execution of a
program. This is accomplished by storing the calls and the answers to tabled subgoals in a proper data structure called the \emph{table
  space}.

From as early as~\cite{Colmerauer-1982,Jaffar-1986}, Prolog implementers have chosen to omit the \emph{occurs check} in unification. This has resulted in generating cyclic terms known as \emph{rational terms} or \emph{rational trees} in Prolog. While the introduction of cyclic terms in Prolog was unintentional, soon after applications for cyclic terms emerged in fields such as definite clause grammars~\cite{Colmerauer-1982,Giannesini-1984}, constraint programming~\cite{Meister-2006,Bagnara-2001} and coinduction~\cite{Gupta-2007}.
But support for rational terms across Prolog systems varies and often fails to provide the functionality required by most applications.
Two common problems are the lack of support for printing query bindings with rational terms and the lack of a tabling mechanism that supports rational terms~\cite{Moura-2013}. Furthermore, several Prolog features are not designed for compatibility with rational terms and can make programming using rational terms challenging and cumbersome. 

In this paper, we first present an extension to the tabling mechanism of YAP Prolog~\cite{CostaVS-12} to support rational terms. To the best of our knowledge, this is the first Prolog built-in tabling system that supports rational terms. The tabling system of XSB~\cite{Swift-12} handles infinite terms by defining a limit on the term size stored within the table space. While this approach allows tabling to work with goals containing rational terms, it does not store a rational term within the table space and, as a result, the returned answers are not the expected ones. Our approach extents the table space to represent rational terms internally. Consequently, answers containing rational terms are properly stored and returned by the tabling mechanism.

Furthermore, by having extended tabling to support rational terms, we present a novel approach for calculating the \emph{greatest fixed point} of goals by using the internal tabling mechanism. In this way, we propose a novel and efficient transformation to implement coinduction. Previous research presented different solutions to support coinduction~\cite{Lars-2007,Gupta-2007,Ancona-2013,Moura-2013} but, to the best of our knowledge, our approach is the first one that uses a native transformation, allowing a more elegant and efficient implementation. In addition, we present an algorithm that ensures a canonical representation for rational terms. 

The remainder of the paper is organized as follows. First in Section~\ref{sec:background}, we introduce some background concepts about tabling, rational terms and coinduction. Next, Section~\ref{sec:tabling_rational_trees} describes our extension to support rational terms within the tries of the table space. Then, we introduce our proposal to implement coinduction through the internal tabling mechanism in Section~\ref{sec:tabling_coinduction}. Section~\ref{sec:canonical_term} addresses the issue of rational terms in canonical form and finally, Section~\ref{sec:conclusions}, concludes by discussing related and future work.


\section{Background}
\label{sec:background}

In this section, we present some background information on tabling, rational
terms, and coinduction.


\subsection{Tabling and Tries}
\label{sec:tabling_tries}

The basic idea behind tabling is straightforward: programs are
evaluated by storing answers for tabled subgoals in an appropriate
data space called the \emph{table space}. Repeated calls to tabled
subgoals are resolved whenever possible by consuming the answers already stored in the table space instead of being re-evaluated against the program clauses.
During this process, as new answers are found, they are stored in their tables and later returned to all repeated calls.

The design of the table space data structures is critical for achieving an efficient tabling implementation. In YAP, as in most tabling systems, the table space representation is based on \emph{tries}~\cite{RamakrishnanIV-99}. A trie is a tree structure where each different path through the tree nodes corresponds to
a term described by the tokens labeling the nodes traversed. For
example, the tokenized form of the term \lstinline{path(X,1,f(Y))} is
the sequence of 5 tokens \lstinline{path/3}, \lstinline{VAR0},
\lstinline{1}, \lstinline{f/1} and \lstinline{VAR1}, where each
variable is represented as a distinct \lstinline{VARi} constant. Two
terms with common prefixes will branch off from each other at the
first distinguishing token. Consider, for example, a second term
\lstinline{path(Z,1,b)}. Since the main functor and the first two arguments, tokens
\lstinline{path/3}, \lstinline{VAR0} and \lstinline{1}, are common to both terms, only
one additional node will be required to fully represent this second
term in the trie, thus saving the space that would be taken by three nodes in this case.

YAP implements tables using two levels of tries~\cite{RamakrishnanIV-99}. The first level,
the \emph{subgoal trie}, stores the tabled subgoal calls. The
second level, the \emph{answer trie}, stores the answers for a given
call. More specifically, each tabled predicate has a \emph{table
  entry} data structure assigned to it, acting as the entry point for
the predicate's subgoal trie. Each different subgoal call is then
represented as a unique path in the subgoal trie, starting at the
table entry and ending in a \emph{subgoal frame} data structure, with
the argument terms being stored within the path's nodes. The subgoal
frame data structure acts as an entry point to the answer
trie. Contrary to subgoal tries, answer trie paths hold just the
substitution terms for the free variables that exist in the argument
terms of the corresponding call.

\begin{wrapfigure}{r}{4.5cm}
\vspace{-\intextsep}
\centering
\includegraphics[width=4.5cm]{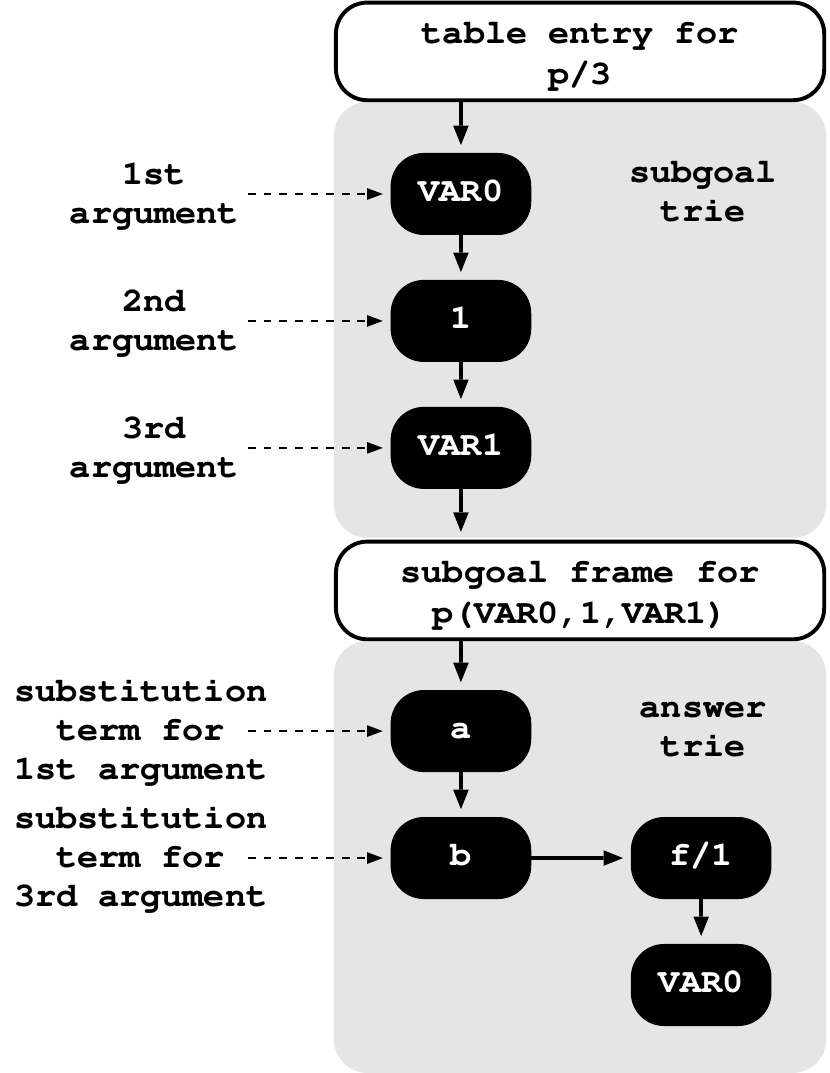}
\caption{YAP's table space organization}
\label{fig_table_design}
\vspace{-\intextsep}
\end{wrapfigure}

An example for a tabled predicate \lstinline{p/3} is shown in
Fig.~\ref{fig_table_design}. Initially, the table entry for
\lstinline{p/3} points to an empty subgoal trie. When the subgoal
\lstinline{p(X,1,Y)} is called, three trie nodes are inserted to
represent the arguments in the call: one for the variable \lstinline{X}
(\lstinline{VAR0}), a second for the integer \lstinline{1}, and third one for
the variable \lstinline{Y} (\lstinline{VAR1}). Since the predicate's
functor term is already represented by its table entry, we can avoid
inserting an explicit node for \lstinline{p/3} in the subgoal
trie. Finally, the leaf node is set to point to a subgoal frame, from
where the answers for the call will be stored. The example shows two
answers for \lstinline{p(X,1,Y)}: \lstinline|{X=a, Y=f(VAR0)}| and
\lstinline|{X=a, Y=b}|. Since both answers have the same substitution
term for argument \lstinline{X}, they share the top node in the answer
trie. For argument \lstinline{Y}, each answer has a different
substitution term and, thus, a different path is used to represent
each.


\subsection{Rational Terms}
\label{sec:rational_terms}

Rational terms, also known as rational trees or cyclic terms, are
infinite terms that can be finitely represented. They can include any
finite sub-term but have at least one infinite sub-term. Rational
terms in logic programming appeared as a side effect of omitting the
\emph{occurs check} in unification. By omitting the occurs
  check, unification can return cyclic terms. A simple example is 
\lstinline{L=[1|L]}, where the variable \lstinline{L} is instantiated to an infinite list of ones. 
Prolog implementers omitted the occurs check in order to reduce the unification complexity from
$O(Size_{Term1}+Size_{Term2})$ to $O(min(Size_{Term1},Size_{Term2}))$,
but since then these cyclic terms found application in areas such as
coinduction, natural language processing or finite-state automata, among others.

Several Prolog systems support the creation and unification of rational terms. These systems also provide built-in predicate implementations capable of handling rational terms without falling into infinite computations. Still, several common predicates such as \lstinline{member/2} cannot handle rational terms.

\begin{lstlisting}
member(E, [E|_]).
member(E, [_|T]) :- member(E, T).
\end{lstlisting}

Consider the call \lstinline{L=[1,2|L], member(E, L)}. The second
clause of \lstinline{member/2} spawns a new member call for every next
suffix of the infinite list \lstinline{L}, generating an infinite
number of calls and answers for \lstinline{member/2}. In order to
handle the occurring cycles, one could implement the following
\lstinline{member/2} predicate.
\begin{lstlisting}
member(E, L) :- member(E, L, []).

member(E, [E|_], _).
member(_, L, S) :- in_stack(L, S), !, fail.
member(E, [H|T], S) :- member(E, T, [[H|T]|S]).
\end{lstlisting}
With \lstinline{in_stack/2} predicate being defined as:
\begin{lstlisting}
in_stack(E, [H|_]) :- E == H.
in_stack(E, [_|T]) :- in_stack(E, T).
\end{lstlisting}

The second clause of \lstinline{member/3} checks whether the input list
of the member call is repeated. If so, it terminates the execution by
removing any left choice points and then fails. Using our example list
\lstinline{L=[1,2|L]}, it will permit only the two unique suffixes
of the list \lstinline{L}.
Unfortunately, this is more of an ad-hoc solution instead of a clean scalable solution that deals with generic rational terms.


\subsection{Coinduction}

Recently, we have seen an increase of interest in coinduction. Several new systematic approaches have appeared and important applications have been noted. co-SLD extends logic programming to allow reasoning over infinite and cyclic structures. For this paper, we follow similar semantics like the ones proposed by~\cite{Gupta-2007} for co-SLD: the \textit{coinductive hypothesis rule} states that \emph{if while proving a goal G (ancestor goal) a subgoal G' (current goal) is found, such as goal G is a variant of G', then the subgoal G' has coinductive success}.

We have seen several approaches to implement co-SLD with the use of program transformation~\cite{Lars-2007,Gupta-2007,Moura-2013} or by using a meta caller approach~\cite{Ancona-2013}. While SLD resolution proves a query and returns the answers of the least fixed points, co-SLD resolution succeeds and returns the answers of the greatest fixed points~\cite{Lars-2007}. As an example, consider the following logic program and the query \lstinline{bin(X)}.
\newpage
\begin{lstlisting}
bin([]).
bin([0|T]) :- bin(T).
bin([1|T]) :- bin(T).
\end{lstlisting}

SLD resolution recognizes all finite lists containing only
\lstinline{0}s and \lstinline{1}s. Now, if we remove the base case, the program does not have
a least fixed point and, as a consequence, SLD resolution will not be
able to solve it. As pointed out by~\cite{Gupta-2007}, there are two
reasons why the program is not solved. First, because the Herbrand
universe does not allow infinite terms and, second, because the
Herbrand universe does not allow infinite proofs. co-SLD allows infinite terms and infinite proofs, as a result the query \lstinline{bin(X)} succeeds and resolution returns two answers that represent the minimal greatest fixed points, \lstinline+{X=[0|X], X=[1|X]}+. However, any infinite pattern of \lstinline{0}s and \lstinline{1}s that can be expressed as a rational term is recognized by the \lstinline{bin/1} coinductive predicate, as exemplified below.
\begin{lstlisting}
% generating solutions for the coinductive bin/1 predicate:
Query:   ?- bin(X).
Answers: X = [0|X] ? ;
         X = [1|X].

% recognizing solutions for the coinductive bin/1 predicate:
Query:   ?- X = [0,1,0,1,0,0,0|X], bin(X).
Answer:  X = [0,1,0,1,0,0,0|X].
\end{lstlisting}


\section{Extending Tabling to Support Rational Terms}
\label{sec:tabling_rational_trees}

To the best of our knowledge, no current tabling system
supports rational terms. A tabled subgoal call or tabled
answer containing rational terms would result in an infinite
computation. The reason for this behavior is that the table space data
structures were not designed to support the finite storing of infinite
terms. YAP is thus the first tabling system that handles rational
terms by extending the table space to support the internal
representation of infinite terms.

Consider, for example, the definition given in~\cite{Gupta-2007} for a
\lstinline{comember/2} coinductive predicate:
\begin{lstlisting}
% comember(Element, List) is true iff Element appears infinite times in List
:- coinductive(comember/2).
comember(H, L) :- drop(H, L, L1), comember(H, L1).

% drop(Element, List, Rest) is true iff Element can be dropped from List resulting 
% in the list Rest 
:- table(drop/3).
drop(H, [H|T], T).
drop(H, [_|T], T1) :- drop(H, T, T1).
\end{lstlisting}

Without tabling with support for rational terms, the auxiliary predicate \lstinline{drop/3} can not be used by the \lstinline{comember/2} predicate to enumerate all the distinct elements of a (rational) list. Only the first element would be enumerated. To avoid tabling, one could write \lstinline{drop/3} as follows, by using the \lstinline{in_stack/2} auxiliary predicate, but certainly writing such a procedural predicate is not in the spirit of logic programming. Similarly, the solution presented by~\cite{Ancona-2013} that avoids tabling by using coinductive success hook predicates requires from the programmer a good understanding of the coinductive meta-interpreter or coinductive transformation mechanics. Ancona's coinductive success hook predicates are not illustrated here for brevity. The crucial concept behind them, is to enable the programmer to modify the behavior when a cycle is detected.
\begin{lstlisting}
drop(E, L, NL) :- drop(E, L, NL, []).
drop(_, L, _, S) :- in_stack(L, S), !, fail.
drop(E, [E|T], T, _).
drop(E, [H|T], T1, S) :- drop(E, T, T1, [[H|T]|S]).
\end{lstlisting}

Motivated by the above example and the goal to elegantly and efficiently support coinduction for YAP at
the low level engine, our first step was to extend YAP's internal tries
representation to support rational terms. So far, tries do not
support cyclic terms and inserting one such term in a trie results in
trapping the program in a cycle. In order for tries to support cyclic
terms, we first defined a representation of cyclic tokens within the
tries; we then extended the trie check/insert algorithm to
detect the tokens where a cycle occurs and at which upper trie token
it should point to; last, we adjusted the mechanism that allows to
reconstruct a term from the stored tokens within the trie to also
handle cyclic terms.

\begin{wrapfigure}{r}{6.5cm}
\vspace{-\intextsep}
\includegraphics[width=6.5cm]{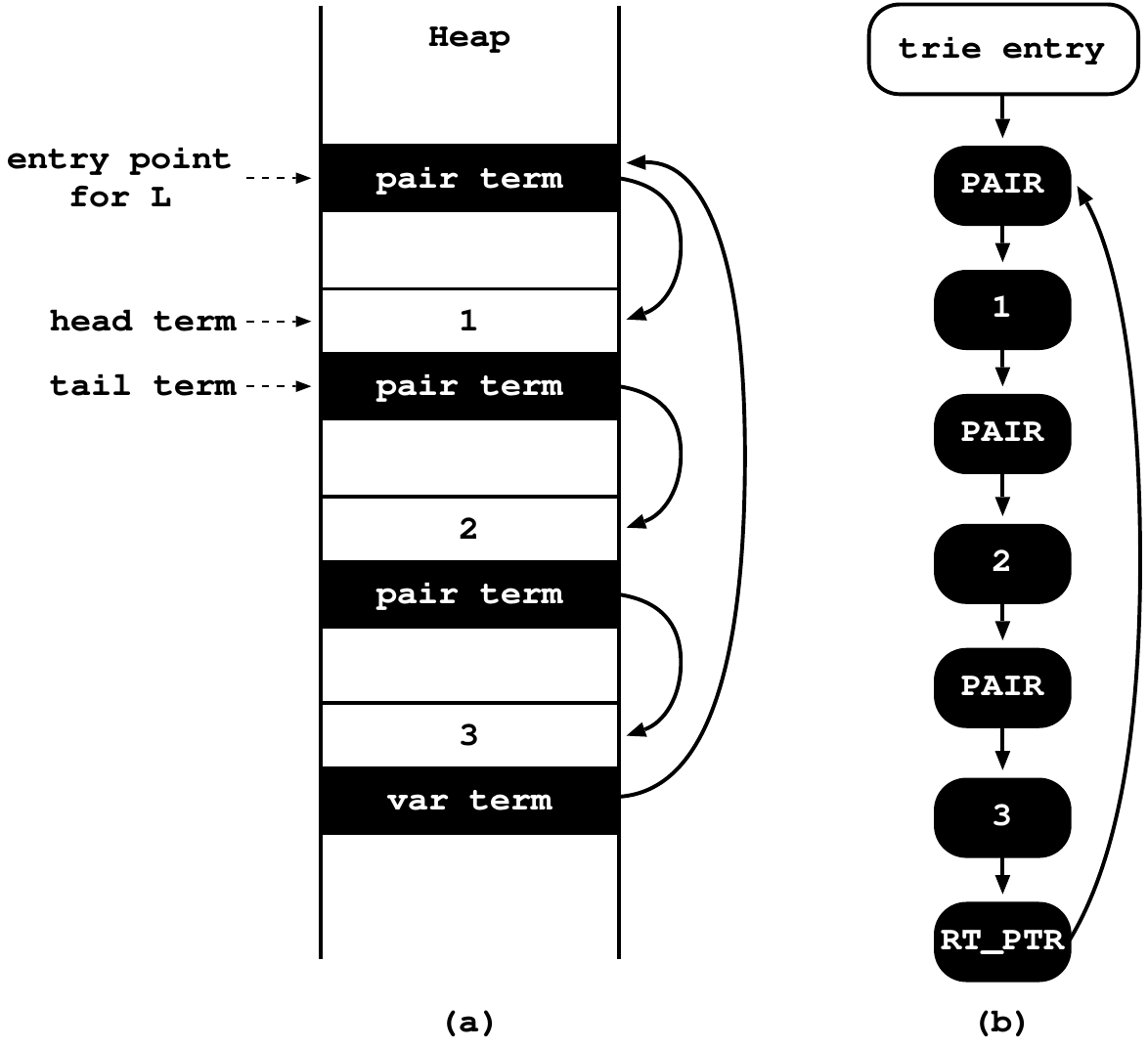}
\caption{YAP's representation for the cyclic term \lstinline{L=[1,2,3|L]} in the \textbf{(a)} heap and \textbf{(b)} the trie}
\label{fig:cyclic_term}
\vspace{-\intextsep}
\end{wrapfigure}

To better explain our trie representation for cyclic terms let's have a look at the heap representation. At the low level engine, a cycle in YAP appears as a \emph{variable term} cell unified with the term that is repeated. Figure~\ref{fig:cyclic_term}(a) shows YAP's internal representation for the cyclic term \lstinline{L=[1,2,3|L]} as it is stored in the heap.

Current trie definitions do not store pointer tokens by default. We extend the trie definition so that it can store a pointer token that points to a trie node. As no other trie nodes are storing pointers, we can safely assume that such a pointer token represents a rational term cycle. In order to represent the cycle within the trie, we store in the node of the cyclic token a pointer to the trie node of the pointed token. Figure~\ref{fig:cyclic_term}(b) illustrates the cyclic term \lstinline{L=[1,2,3|L]} as it is stored in the trie representation\footnote{For the sake of simplicity, we describe YAP's original trie implementation for lists. In fact, our current implementation extends the optimized \emph{compact lists} representation for tries as described in~\cite{Raimundo-10}.} with \lstinline{RT_PTR} illustrating the rational term pointer token. In order for our approach to be safe we consider Proposition~\ref{proposition:cyclic_term}.
\begin{proposition}
Cycles in a Prolog term can only point at a list (pair) term or a functor term.
\label{proposition:cyclic_term}
\end{proposition}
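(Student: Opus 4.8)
The plan is to argue structurally about the term graph, treating a Prolog term as a rooted directed graph whose nodes are heap cells and whose edges run from each compound or pair cell to its argument sub-terms (respectively, to its head and tail). A finite term is then a tree, and a rational term is exactly a finite graph of this form that contains at least one cycle; the question is which kind of node a back-pointer closing such a cycle may target.

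First I would classify the node types by their out-degree in this graph. Atoms, numbers, and the empty list \lstinline{[]} are constants carrying no sub-term pointers, so they have out-degree zero; an unbound variable likewise points to no sub-term. By contrast, a pair cell carries two sub-term pointers (head and tail) and a compound term of arity $n$ carries $n$ argument pointers, so both have out-degree at least one. Thus the leaves of every term are precisely the constants and unbound variables, while every internal (non-leaf) node is a pair or a compound functor term.

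The key step is then the observation that any node lying on a cycle must have both an incoming and an outgoing edge, hence out-degree at least one. Since the node targeted by the back-pointer lies on the cycle it closes, it cannot be a leaf; by the classification above it must therefore be a pair or a compound functor term, which is exactly the claim. Equivalently, the target of a back-edge is always an ancestor of the edge's source along a root-to-node path, and every such ancestor is an internal node, hence a pair or a compound term.

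The main obstacle I expect is the careful treatment of variables, since at the heap level---as the excerpt notes---a cycle is realized through a variable cell bound to the repeated term, so naively the pointer appears to target a variable. I would resolve this by arguing at the level of the dereferenced term graph: following a chain of bound variables always terminates at a non-variable term, and for that term to participate in a cycle it must contain the originating variable as a descendant, which forces it to have sub-terms and hence to be a pair or a compound term. A bare binding such as \lstinline{X = X} introduces no sub-term structure and so yields no rational term. Once this dereferencing convention is fixed, the case analysis is routine and the structural argument above closes the proof.
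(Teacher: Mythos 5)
Your proof is correct and rests on the same key observation as the paper's: atomic terms and unbound variables carry no sub-term pointers, so a variable cell bound to one simply dereferences away and can never be the target of a cycle, leaving only pair and functor cells as possible targets. Your graph-theoretic out-degree framing is a more formal scaffold around this, and your closing paragraph on dereferencing variable chains is essentially the paper's entire (one-sentence) proof.
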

\begin{proof}
If a variable term cell points to an atomic term or to an unbound
variable then it would be dereferenced to that atomic term or unbound
variable.
\end{proof}

We detect the cycles upon the insertion of a term in the trie. To do
so, we first memorize the heap address of every list or functor term
that we encounter \textsl{tuppled} with the address of the trie node that
represents the list or functor token in the trie. Before inserting a
new list or functor token in the trie, we check whether we have
already inserted a node for it. If so, we return the address of the
corresponding trie node and create a new trie node containing a pointer token with the address of the returned trie node.

Besides storing a cyclic term, the tabling system must also be able to return the same cyclic term when asked. In YAP Prolog, to reconstruct a term from the trie, the tabling mechanism reads the trie branch bottom up and constructs the subterms stored in each token incrementally directly in the Prolog heap. The original term reconstruction mechanism assumes that no reference pointers are stored within the trie nodes. As the construction of the term is bottom up our system does not know the reference of a cycle before fully constructing the term. In order to tackle this issue when a reference pointer is found, we create a new unbound variable in the heap and memorize the heap address and the trie node that has the token which will be the entry point of the term that this variable must be unified. When the term has been fully constructed, we unify that unbound variable with the term generated from the corresponding trie node that the pointer token is pointing at. In order for our approach to be safe we considered Proposition~\ref{proposition:cyclic_earlier}.
\begin{proposition}
When traversing a Prolog term from the heap entry point, cyclic references point at preceding heap locations.
\label{proposition:cyclic_earlier}
\end{proposition}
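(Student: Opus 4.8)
The plan is to prove the statement by induction on the construction of the rational term, relying on the monotonic growth of YAP's heap together with Proposition~\ref{proposition:cyclic_term}. First I would fix the traversal to be a depth-first walk starting at the heap entry point, so that ``preceding heap location'' can be read as a cell the walk has already visited, which by the heap discipline is a cell at a strictly lower address. The goal is then to show that every reference closing a cycle is a \emph{back-edge} to such an already-visited cell, and never a forward- or cross-edge.

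The first ingredient is the heap invariant: cells are allocated at strictly increasing addresses, and unification with the occurs check omitted only ever writes a reference whose target is a cell that already exists on the heap. The second ingredient is Proposition~\ref{proposition:cyclic_term}, which tells us that the anchor of any cycle is a list (pair) or functor cell, hence a compound cell that was materialised before the cell from which the cycle is closed. With these in hand I would argue inductively: initially there are no cycles, so the claim holds vacuously; a new cycle can only arise when a variable $V$ is unified with a compound term that transitively reaches $V$ again. Since the reference closing the cycle stores the address of $V$, the cell $V$ must already reside on the heap when that compound is built, and by monotone allocation its address is strictly smaller than that of the cell emitting the back-reference.

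The main obstacle I expect is transferring this address ordering to the traversal order. Dereferencing complicates the picture, because the variable anchoring a cycle may sit at an even lower address than the term's nominal entry point, so ``preceding'' must be argued relative to the walk rather than naively in absolute addresses. The substantive content is therefore to establish that a rational-term cycle is necessarily a back-edge to an \emph{ancestor} in the depth-first walk: this is exactly what guarantees that, during the bottom-up trie reconstruction, the fresh variable created for a pointer token can always be unified with an already-reconstructed subterm, and it is precisely the property that lets the trie store the cycle as a pointer up the current branch.
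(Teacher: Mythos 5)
Your proposal has a genuine gap, and it sits exactly where you yourself locate ``the main obstacle.'' The paper reads ``preceding'' relative to the traversal, not to absolute heap addresses, and its entire proof is a one-line contraposition: if, while traversing the term from its entry point, a variable term cell points at a heap location that has not yet been traversed, that reference does not introduce a cycle; hence any cycle-introducing reference points at an already-traversed (preceding) location. Your plan instead routes through heap-address monotonicity, and the lemma you extract from it is false in the form you use it. Counterexample: allocate an unbound variable $V$, later build the structure \lstinline{f(V)} at a higher address, then bind $V$ to that structure --- exactly the configuration the paper describes (``a variable term cell unified with the term that is repeated''). The anchoring cell $V$ now points \emph{forward} in absolute address order. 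Similarly, for \lstinline{X=f(Y), Y=g(X)} built left to right, traversal entered at the \lstinline{g/1} structure follows a tree edge down to the \lstinline{f/1} structure at a \emph{lower} address, and the cycle-closing argument cell of \lstinline{f/1} then points at a strictly \emph{higher} address. So ``by monotone allocation its address is strictly smaller than that of the cell emitting the back-reference'' does not hold in general, and no induction on term construction can rescue a claim phrased in address order.

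You do notice this and retreat to the correct statement --- that the cycle-closing reference is a back-edge to an ancestor of the depth-first walk --- but the proposal then stops, declaring that establishing this ``is exactly the substantive content.'' That is the entire proof obligation, and it is never discharged; nor does it follow from your two ingredients, since a tree edge can itself lie on a cycle (traverse \lstinline{X=f(g(X))} from the outer structure: the edge into \lstinline{g/1} is a tree edge on the cycle), so ``lies on a cycle'' and ``closes the cycle during the walk'' must be separated, which requires either a white-path/back-edge argument or, more simply, the paper's contrapositive. The repair is to drop the address machinery entirely and argue in traversal order as the paper does: at the moment a reference is followed, a target not yet visited cannot complete a cycle, so the reference under which the traversal re-enters its own path --- the only one the trie insertion and bottom-up reconstruction algorithms ever materialise as a pointer token --- necessarily points at a preceding location of the walk.
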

\begin{proof}
Indeed, if while traversing a Prolog term, starting from the entry point, a variable term cell points to a following heap location it does not introduce a cycle.
\end{proof}

Returning to our \lstinline{member/2} example from
Section~\ref{sec:rational_terms}, we now present an alternative more
elegant solution to handle lists with rational terms.

\begin{lstlisting}
:- table(member/2).
member(E, [E|_]).
member(E, [_|T]) :- member(E, T).
\end{lstlisting}

Please note that tabling not only remembers answers but also
handles introduced cycles. Similarly, any predicate that needs to
process a list which contains rational terms can use tabling to avoid
using a similar mechanism like the one presented at
Section~\ref{sec:rational_terms}.


\section{A Tabling Based Approach to Coinduction}
\label{sec:tabling_coinduction}

In this section, we describe a modified tabling strategy that instead
of computing the least fixed point of a logic program, computes the
greatest fixed point. In order to compute the greatest fixed point of
a logic program, it is customary to either define a program
transformation~\cite{Lars-2007,Gupta-2007,Moura-2013} or implement a
meta-interpreter~\cite{Ancona-2013} that solves the logic
program. Both approaches have been presented in the past and, to the
best of our knowledge, there is no previous implementation that computes
the greatest fix point within the WAM.

In order to better describe our tabling based approach, we first present in Alg.~\ref{alg:coinduction_program_transformation} the program transformation described in~\cite{Moura-2013}. There are three key features in the transformation:
(i) it uses a stack to keep track the ancestor calls (also called coinductive hypothesis) of the goal that the greatest fix point is under computation;
(ii) the stored ancestor calls in the stack are then used for detecting cycles (for coinduction purposes a cycle occurs when an ancestor call unifies with the current call) in the execution of the goal; and
(iii) when a cycle is detected, it is handled as a success and unifies the called goal with the repeated ancestor goal.
\begin{algorithm}
\textbf{Input:} a coinductive logic predicate \lstinline{p/1}
\begin{lstlisting}
:- coinductive(p/1).
p(Args) :-
	% the body/2 predicate abstracts the processing of the predicate arguments
	body(Args, NewArgs),
	p(NewArgs).
\end{lstlisting}
\textbf{Output:} \lstinline{p/1} transformed to call an auxiliary predicate that implements coinduction
\begin{lstlisting}
p(Args) :-
	% start with an empty stack of coinductive hypothesis
	p(Args, []).
p(Args, Stack) :-
	body(Args, NewArgs),
	p_coinduction_perflight(NewArgs, Stack).
p_coinduction_perflight(Args, Stack) :-
	(	member(p(Args), Stack) *->
		% coinductive success;
		% the (*->)/2 soft-cut construct supports finding of multiple solutions
		true
	;	% add current goal to the stack of coinductive hypothesis and continue
		p(Args, [p(Args)|Stack])
	).
\end{lstlisting}
\caption{The coinductive program transformation of~\protect\cite{Moura-2013}.}
\label{alg:coinduction_program_transformation}
\end{algorithm}

Algorithm~\ref{alg:coinduction_program_transformation}
is akin both to the cycle handling used in the PTTP~\cite{Stickel-1988} and in tabling
systems. Tabling memoizes subgoals in a table and checks whether there
is a repeated call in order to break the cycle, thus allowing cyclic
programs to terminate. Conceptually, the subgoal memoization mechanism
of tabling used to perform cycle detection is similar to the
collection of calls in the stack used by
Alg.~\ref{alg:coinduction_program_transformation}. One slight difference is that Alg.~\ref{alg:coinduction_program_transformation} is using unification to detect repeating goals and YAP's tabling mechanism uses \emph{variant checking}\footnote{Two goals are said to be variants if they are the same up to variable renaming.}, which may alter the depth of derivations needed to detect the coinductive cycle. To compute the greatest fixed point of a logic program by reusing the mechanism of tabling, we thus noticed that we need to do two main
modifications to YAP's tabling mechanism. First, we need to modify the
behavior of the cycle detection mechanism in order to succeed the subgoal when a cycle was detected and, second, we need to unify the arguments of the original
call of the goal with the arguments of the repeated subgoal where the
cycle was detected. While the first modification is easy to motivate,
the second modification is not so obvious. To clarify the function and necessity of the second modification, we use the example shown in Fig.~\ref{fig:bin_example}.

\begin{figure}[ht]
\centering
\begin{minipage}[]{.44\linewidth}
\begin{lstlisting}
:- coinductive(bin/1).
bin([0|T]) :- bin(T).
bin([1|T]) :- bin(T).

Query:   ?- bin(L).
Answers: L = [0|L] ? ;
         L = [1|L].
Query:   ?- X=[0,1,0,0|X], bin(X).
Answer:  X = [0,1,0,0|X].
\end{lstlisting}
\end{minipage}
\begin{minipage}[c]{.45\linewidth}
\includegraphics[width=5.6cm]{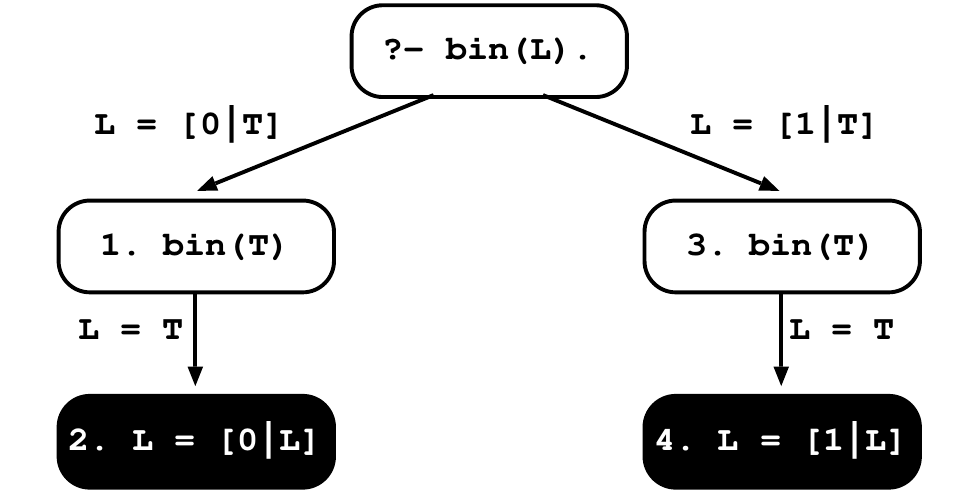}
\end{minipage}
\caption{The coinductive predicate \lstinline{bin/1} and its coinductive SLG tree}
\label{fig:bin_example}
\end{figure}

In this example, the unification of the query goal
\lstinline{bin(L)} with the head of both clauses for \lstinline{bin/1}
leaves the tail \lstinline{T} of the list unbound (\lstinline{L=[0|T]}
and \lstinline{L=[1|T]} for the first and second clause,
respectively). The cycle is then detected when the subgoal
\lstinline{bin(T)} is called next (steps 1 and 3 in
Fig.~\ref{fig:bin_example}). Unifying the arguments for the initial
query goal \lstinline{bin(L)} with the subgoal \lstinline{bin(T)}
leads to unifying \lstinline{L} with \lstinline{T}, which results in
the answers \lstinline{L=[0|L]} and \lstinline{L=[1|L]} for the query
goal (steps 2 and 4 in Fig.~\ref{fig:bin_example}).

\begin{wrapfigure}{r}{7.2cm}
\vspace{-\intextsep}
\includegraphics[width=7.2cm]{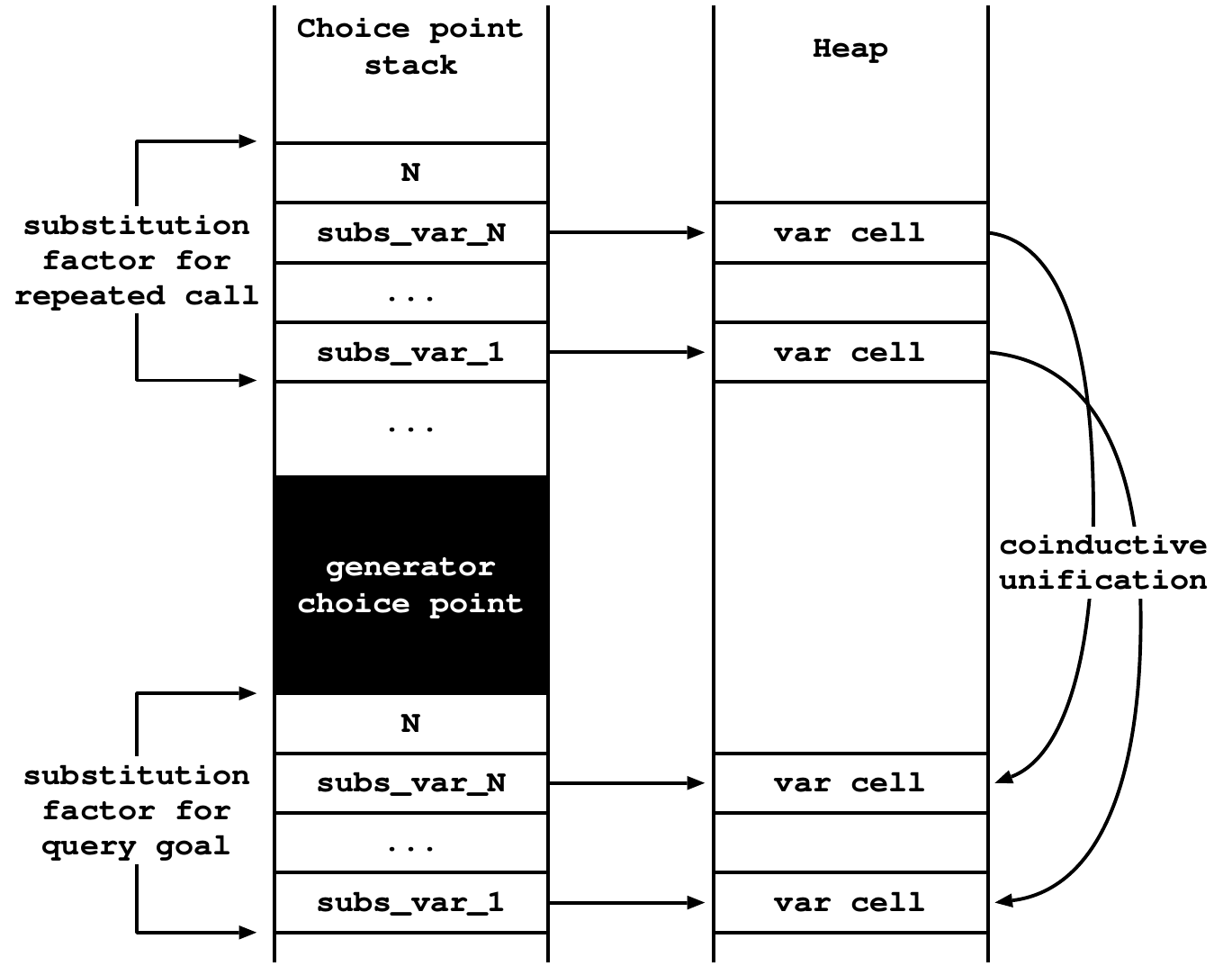}
\caption{Using the substitution factors to implement coinductive unification}
\label{fig:subs_factor}
\vspace{-\intextsep}
\end{wrapfigure}

Now that we have identified the necessary behavior for computing the greatest fixed point with co-SLG, we present how it is
actually implemented.  As explained in section~\ref{sec:tabling_tries}, the tabling mechanism retains a state called subgoal frame for each subgoal
call that is proving. In particular, this includes the
\emph{substitution factor}~\cite{RamakrishnanIV-99} for the subgoal at
hand, i.e., the set of free variables which exist within the term
arguments of the subgoal call. For example, the substitution factor
for the subgoal call \lstinline{p(X,1,f(Y))} is the triplet
\lstinline{<2,X,Y>}, where the first entry indicates the number of
variables. The substitution factor is stored on the choice
point stack during the process of checking/inserting the call in the subgoal
trie for the predicate at hand. The substitution factor is then used
to reduce the number of (substitution) terms to copy into and out of
the answer tries.

During evaluation, when a tabled subgoal is first called, it is
marked as a \emph{generator call} and tagged as \emph{evaluating}. At
the low level engine, that corresponds to storing a generator choice
point on top of the substitution factor for the subgoal. Later, if a
subgoal tagged as evaluating is encountered again then tabling treats
it as a cycle. Exactly at that part of the tabling mechanism we intervene.
Please note that at that point, we already have the substitution factor
for the repeated call also on the choice point stack. Our task is then to
unify both substitution factors in order to perform the needed
unification for the coinductive strategy. Figure~\ref{fig:subs_factor}
illustrates how this mechanism works and how we use the substitution
factors to implement coinductive unification. It presents the choice
point and heap stacks at the moment a cycle is found and after
performing coinductive unification. This unification will always
succeed as YAP uses variant checking in order to identify cycles. The next step is to follow the substitution terms for the
generator choice point and insert the resulting unification as an
answer in the table space for the generator call. Finally, the
execution succeeds and returns the answer.


\section{Canonical Representation for Rational Terms}
\label{sec:canonical_term}

In this section, we address the known issue of rational terms and their canonical or minimal representation~\cite{Colmerauer-1982}. For example, the rational terms \lstinline{A=[1|A]}, \lstinline{B=[1,1|B]}, \lstinline{C=[1|A]} are all both unifiable and equal:
\begin{lstlisting}
?- A=[1|A], B=[1,1|B], C=[1|A], A=B, B=C, A==B, B==C.
% SWI Prolog Result:
A = B, B = C, C = [1|_S1], % where
    _S1 = [1|_S1].
% YAP Prolog Result:
Stable version:      A = B = C = [1|**].
Development version: A=[1|A], B=[1,1|B], C=[1|A].
\end{lstlisting}

\begin{wrapfigure}{r}{4.5cm}
\centering
\includegraphics[width=4.5cm]{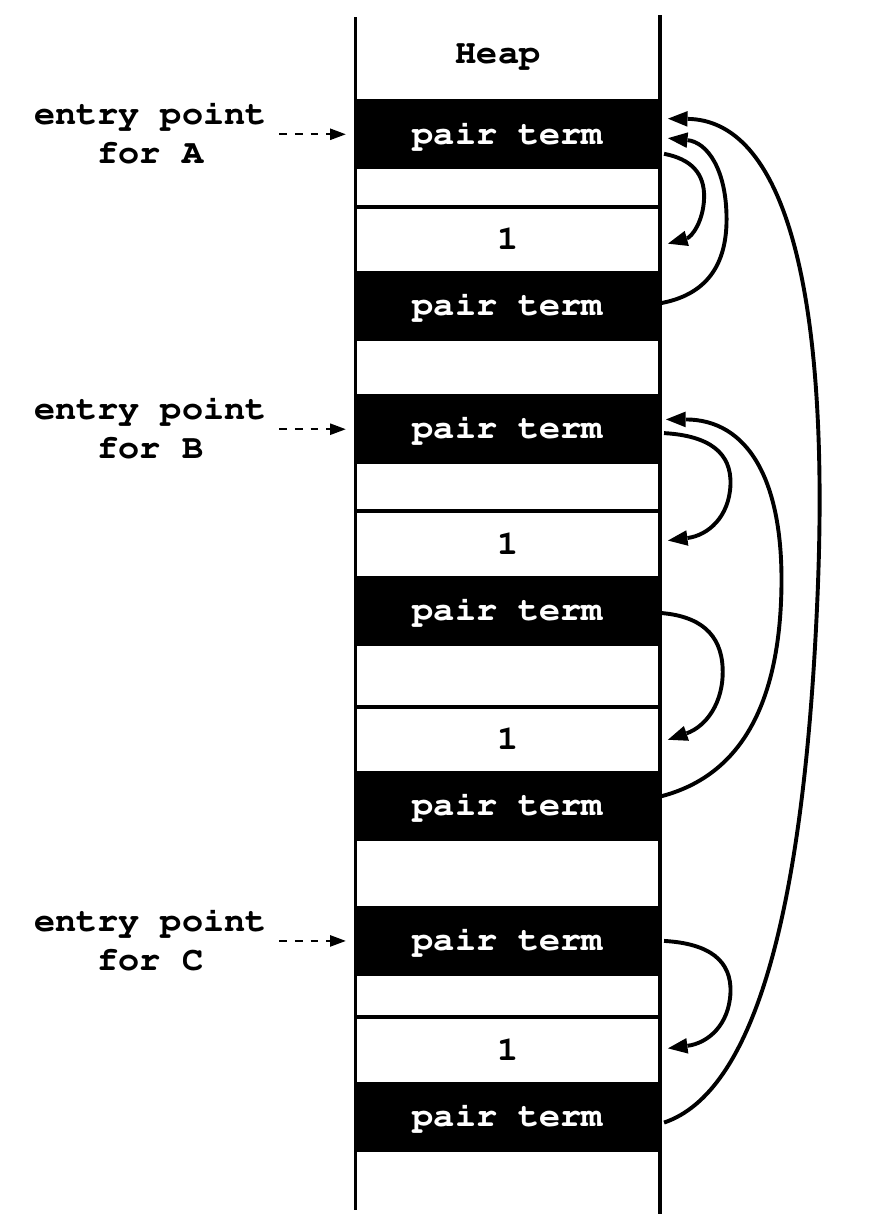}
\caption{Heap representation of rational terms \lstinline{A=[1|A]}, \lstinline{B=[1,1|B]}, \lstinline{C=[1|A]}}
\label{fig:heap}
\vspace{-\intextsep}
\end{wrapfigure}

While all three lists represent the same rational term, their underlying WAM representation in the heap may not be the same\footnote{We have verified that the terms have different heap representation in SWI and YAP Prolog systems.}. Figure~\ref{fig:heap} presents YAP's representation of each of the lists in the heap. Notice that all three lists have a different representation. This particularity of rational terms, which does not appear in normal terms, imposes a limitation for tabling. All these representations appear as different terms for tabling and create different subgoals and answers within the table space. As a result a tabled predicate may table more subgoals than needed or succeed multiple times for an answer that should succeed only once.

In order to address the problem of the extra answers, a possible solution would be to perform a kind of answer subsumption. But a more elegant solution would be to implement an algorithm that converts a rational term to its canonical form. In this section, we present such an algorithm. Predicate \lstinline{canonical_term/2} presented in Alg.~\ref{alg:canonical_term} takes any term and converts its rational parts to their canonical form.

The idea behind the algorithm is to first fragment the term to its cyclic subterms, continue by reconstructing each cyclic subterm (now acyclic) and, finally, reintroduce the cycle to the reconstructed subterms. To reconstruct each cyclic subterm as acyclic, the algorithm copies the unique parts of the term and introduces an unbound variable instead of the cyclic references. Then, the algorithm binds the unbound variable to the reconstructed subterm, recreating the cycle. Take for example the rational term \lstinline{L=[1,2,1,2|L]}, term \lstinline{L} is being fragmented in the following subterms \lstinline{L0=[1|L1]}, \lstinline{L1=[2|L3]}, \lstinline{L3=[1,2|L0]}. We do not need to fragment the term \lstinline{L3} as, at that point, our algorithm detects a cycle and replaces term \lstinline{L3} with an unbound variable \lstinline{OpenEnd}. Thus we get the following subterms \lstinline{L0=[1|L1]} and \lstinline{L1=[2|OpenEnd]}. Binding \lstinline{OpenEnd=L0} results to the canonical rational term \lstinline{L0=[1,2|L0]}.

\begin{algorithm}
\textbf{Input:} a rational term \lstinline{Term}\\
\textbf{Output:} a rational term \lstinline{Canonical} in canonical representation
\begin{lstlisting}[numbers=left]
canonical_term(Term, Canonical) :-
	Term =.. InList,
	decompose_cyclic_term(Term, InList, OutList, OpenEnd, [Term]),
	Canonical =.. OutList,
	Canonical = OpenEnd.
 
decompose_cyclic_term(_CyclicTerm, [], [], _OpenEnd, _Stack).
decompose_cyclic_term(CyclicTerm, [Term|Tail], [Term|NewTail], OpenEnd, Stack) :-
	acyclic_term(Term), !,
	decompose_cyclic_term(CyclicTerm, Tail, NewTail, OpenEnd, Stack).
decompose_cyclic_term(CyclicTerm,[Term|Tail],[OpenEnd|NewTail],OpenEnd,Stack) :-
	CyclicTerm == Term, !,
	decompose_cyclic_term(CyclicTerm, Tail, NewTail, OpenEnd, Stack).
decompose_cyclic_term(CyclicTerm,[Term|Tail],[Canonical|NewTail],OpenEnd,Stack) :-
	\+ in_stack(Term, Stack), !,
	Term =.. InList,
	decompose_cyclic_term(Term, InList, OutList, OpenEnd2, [Term|Stack]),
	Canonical =.. OutList,
	(	Canonical = OpenEnd2,
		Canonical == Term,
		!
	;	OpenEnd2 = OpenEnd
	),
	decompose_cyclic_term(CyclicTerm, Tail, NewTail, OpenEnd, Stack).
decompose_cyclic_term(CyclicTerm,[_Term|Tail],[OpenEnd|NewTail],OpenEnd,Stack) :-
	decompose_cyclic_term(CyclicTerm, Tail, NewTail, OpenEnd, Stack).
\end{lstlisting}
\caption{Predicate \lstinline{canonical_term/2}}
\label{alg:canonical_term}
\end{algorithm}

The bulk of the algorithm is found at the fourth clause of \lstinline{decompose_cyclic_term/5}. At that part we have detected a cyclic subterm that we have to treat recursively. In particular, lines 19--23 implement an important step. Returning to our example when the cycle is detected, the algorithm returns the unbound variable to each fragmented subterm. First, the subterm \lstinline{L1=[2|OpenEnd]} appears and the algorithm needs to resolve whether it must unify \lstinline{OpenEnd} with \lstinline{L1} or whether \lstinline{OpenEnd} must be unified with a parent subterm. In order to verify that, lines 19--23 of the algorithm unify the subterm with the unbound variable and after attempt to unify the created rational term with the original rational term. For our example the algorithm generates \lstinline{L1=[2|L1]} and attempt to unify with \lstinline{L=[1,2,1,2|L]}, as the unification fails the algorithm propagates the unbound variable to be unified with the parent subterm \lstinline{L0=[1|L1]}.

The fifth clause of \lstinline{decompose_cyclic_term/5} is the location where a cycle is actually found. At that point we can drop the original cyclic subterm and place an unbound variable within the newly constructed term. The third clause of \lstinline{decompose_cyclic_term/5} could be omitted; it operates as a shortcut for simplifying rational terms of the form \lstinline{F=f(a,f(a,F,b),b)}. The rest of the algorithm is pretty much straightforward, the first and second clause of \lstinline{decompose_cyclic_term/5} are the termination condition and a copy of the non-rational parts of the term to the new term respectively.


\section{Conclusions, Related and Future Work}
\label{sec:conclusions}

In this paper, we presented a novel approach that permits the insertion of rational terms in tries and we extended tabling over goals with rational terms. Rational terms are used in several fields such as coinduction, definite clause grammars, constrain programming, lazy evaluation, finite state automata, etc. Our second novel contribution is a native-to-WAM coinductive transformation. We introduced co-SLG by presenting a tabling transformation that computes the greatest fixed point that provides an efficient elegant solution to coinduction. Finally, we provided an ISO Prolog approach that converts any rational term to its canonical form.

As this work spans over several different fields of logic programming, one finds related work from a variety of sources. In tabling, XSB Prolog~\cite{Swift-12} has limited support for infinite terms within XSB's tabling mechanism. The user can set a limit for the length of a term and define an action among \emph{error}, \emph{warning} or \emph{abstract}. When the limit is reached the system performs the chosen action. This mechanism, while it can ensure a cyclic safe program, is not suitable to handle rational terms as entities neither it is suitable to be used for coinduction purposes.

In the area of coinduction, we find similarities in the work of~\cite{Lars-2007,Gupta-2007,Moura-2013} that makes clear the importance of tabling for coinduction. \cite{Ancona-2013} presents an alternative approach to coinduction where tabling is not required, still this approach may hurt readability of the coinductive predicate definitions. Coinduction for functional languages is being a topic of research for some time now~\cite{Gordon-94} and it is been used to prove properties of lazy streams.

The problem of equivalence for finite automata~\cite{Hopcroft-71} is similar with the problem of reducing a rational term to its canonical form. Recently,~\cite{Schrijvers-2012} presented an approach to check equality of rational terms in functional programming languages. Finally, SWI Prolog~\cite{Wielemaker-12} implements a \lstinline{term_factorized/3} predicate that collects all subterms that are used multiple times in a term and substitutes them by variables. While \lstinline{term_factorized/3} serves a different purpose, the generated subterms are in canonical form and one could use this predicate to create the canonical form of a rational term.

As this is the first tabling mechanism that supports rational terms there is room for improvements and future work. On the tabling side, we should migrate our extension to work with more tabling strategies and options. On the side of rational terms, we intent to further investigate transforming rational terms to their canonical form.


\section*{Acknowledgments}
The authors want to thank Vítor Santos Costa for his suggestions and technical support. We also want to thank the anonymous reviewers for their comments and help to improve our paper. This work is partially funded by the ERDF (European Regional Development Fund) through the COMPETE Programme and by FCT (Portuguese Foundation for Science and Technology) within projects SIBILA (NORTE-07-0124-FEDER-000059) and PEst (FCOMP-01-0124-FEDER-037281).

\bibliographystyle{acmtrans}

\appendix

\section*{YAP Installation and Rational Term Support}

At the time of this submission, our contributions are part of the development version of YAP (\lstinline{git clone git://git.code.sf.net/p/YAP/YAP-6.3}). Inside the folder \lstinline{YAP-6.3/} the \lstinline{ICLP2014_examples.YAP} file contains the paper examples using the appropriate tabling settings. Currently, tabling supports rational terms only when the answers are loaded by the tries. To activate that option one can use the following YAP flag: 
\begin{lstlisting}
yap_flag(tabling_mode, load_answers). 
\end{lstlisting}
Furthermore, our coinductive transformation can be activated by the following directives:
\begin{lstlisting}
:- table PREDICATE/ARITY.
:- tabling_mode(PREDICATE/ARITY, coinductive).
\end{lstlisting}

\textbf{Technical Requirements:} Our tabling extensions only require that the Prolog system allows the creation of rational terms and that unification (\lstinline{=/2}) works between rational terms. Our \lstinline{canonical_term/2} predicate also requires that the operators \lstinline{==/2}, \lstinline{=../2} work with rational terms.

\section{Coinduction Examples}
The following examples are recreations of the examples presented at~\cite{Moura-2013} by using our implementation.
\begin{lstlisting}
:- table(comember/2).
:- tabling_mode(comember/2, coinductive).

% Returns the infinite members of a list.
comember(H, L) :-
  drop(H, L, L1),
  comember(H, L1).

:- table(drop/3).
drop(H, [H|T], T).
drop(H, [_|T], T1) :- drop(H, T, T1).

% Queries:
?- _L=[1,2|_B], _B=[3,4,5|_B], comember(E, _L).
E = 3 ? ;
E = 4 ? ;
E = 5 ? ;
false.

?- comember(1, A).
A = [1|A] ? ;
A = [_1,1,_1|A] ? ;
A = [_1,_2,1,_1|A] ? ;
A = [_1,_2,_3,1,_1|A] ? 
...

?- A=[1,2,3|A], drop(H, A, T).
A = [1,2,3|A],
H = 1,
T = [2,3,1|T] ? ;
A = [1,2,3|A],
H = 2,
T = [3,1,2|T] ? ;
A = T = [1,2,3|A],
H = 3 ? ;
false.

?- B=[1|A],A=[2,3|A], drop(H, B, T).
A = T = [2,3|A],
B = [1,2,3|B],
H = 1 ? ;
A = [2,3|A],
B = [1,2,3|B],
H = 2,
T = [3,2|T] ? ;
A = T = [2,3|A],
B = [1,2,3|B],
H = 3 ? ;
false.

?- A=[1,2,3|A], member(H, A).
A = [1,2,3|A],
H = 1 ? ;
A = [1,2,3|A],
H = 2 ? ;
A = [1,2,3|A],
H = 3 ? ;
A = [1,2,3|A],
H = 1 ? ;
A = [1,2,3|A],
H = 2 ? 
...

?- B=[1|A],A=[2,3|A], member(H, B).
A = [2,3|A],
B = [1,2,3|B],
H = 1 ? ;
A = [2,3|A],
B = [1,2,3|B],
H = 2 ? ;
A = [2,3|A],
B = [1,2,3|B],
H = 3 ? ;
A = [2,3|A],
B = [1,2,3|B],
H = 2 ? ;
A = [2,3|A],
B = [1,2,3|B],
H = 3 ? 
...
\end{lstlisting}

\begin{lstlisting}
:- table(p/1).
:- tabling_mode(p/1, coinductive).

:- table(q/1).
:- tabling_mode(q/1, coinductive).

:- table(r/1).
:- tabling_mode(r/1, coinductive).

% Tangle example.
p([a|X]) :- q(X).
p([c|X]) :- r(X).
q([b|X]) :- p(X).
r([d|X]) :- p(X).

% Queries:
?- p(X).
X = [a,b|X] ? ;
X = [c,d|X].

?- L = [a,b,c,d|L], p(L).
L = [a,b,c,d|L].

?- L = [a,c|L], p(L).
false.
\end{lstlisting}

\begin{lstlisting}
:- table(automaton/2).
:- tabling_mode(automaton/2, coinductive).

% Automaton example.
automaton(State, [Input|Inputs]) :-
  trans(State, Input, NewState),
  automaton(NewState, Inputs).

trans(s0, a, s1).
trans(s1, b, s2).
trans(s2, c, s3).
trans(s2, e, s0).
trans(s3, d, s0).

% Queries:
?- automaton(s0, X).
X = [a,b,c,d|X] ? ;
X = [a,b,e|X].

?- L = [a,b,c,d,a,b,e|L], automaton(s0, L).
L = [a,b,c,d,a,b,e|L].

?- L = [a,b,e,c,d|L], automaton(s0, L).
false.
\end{lstlisting}

\begin{lstlisting}
:- table(sieve/2).
:- tabling_mode(sieve/2, coinductive).

:- table(filter/3).
:- tabling_mode(filter/3, coinductive).

% computes a coinductive list with all the primes in the 2..N interval
primes(N, Primes) :-
  generate_infinite_list(N, List),
  sieve(List, Primes).

% generate a coinductive list with a 2..N repeating patern
generate_infinite_list(N, List) :-
  sequence(2, N, List, List).

sequence(Sup, Sup, [Sup| List], List) :-
  !.
sequence(Inf, Sup, [Inf| List], Tail) :-
  Next is Inf + 1,
  sequence(Next, Sup, List, Tail).

sieve([H| T], [H| R]) :-
  filter(H, T, F),
  sieve(F, R).

filter(H, [K| T], L) :-
  (  K > H, K mod H =:= 0 ->
     % throw away the multiple we found
     L = T1
  ;  % we must not throw away the integer used for filtering
     % as we must return a filtered coinductive list
     L = [K| T1]
  ),
  filter(H, T, T1).

% Queries:
?- primes(20, P).
P = [2,3,5,7,11,13,17,19,2,3,5,7,11,13,17,19,2,3|P].
\end{lstlisting}

\section{Experiments}
We used the following example program in order to run experiments.
\begin{lstlisting}
:- table(path/2).
:- tabling_mode(path/2, coinductive).

% Finds infinite paths starting from node F.
path(F, [F|P]) :-
  edge(F, N),
  path(N, P).

edge(1, 2).
edge(1, 3).
edge(2, 4).
edge(2, 3).
edge(3, 2).

% Queries:
?- path(1, P).
P = [1,2,3|P] ? ;
P = [1,3,2|P].

?- path(2, P).
P = [2,3|P].

?- path(3, P).
P = [3,2|P].

?- path(4, P).
false.
\end{lstlisting}

Instead of the small graph shown above, we used a fully connected graph of different sizes, as presented next:

\begin{lstlisting}
full_edge_size(8).

edge(X, Y) :-
  posint(X),
  posint(Y),
  X \== Y.

posint(N) :-
  posint(N, 0).
posint(_, I) :-
  full_edge_size(N),
  I > N, !,
  fail.
posint(I, I).
posint(X, I) :-
  NI is I + 1,
  posint(X, NI).
\end{lstlisting}

And implemented the same program in Logtalk:

\begin{lstlisting}
:- object(path).
  :- public(path/2).
  :- coinductive(path/2).
%  :- table(path/2). % used for tabled co-SLD

  path(F, [F|P]) :-
    edge(F, N),
    path(N, P).

  full_edge_size(8).

  edge(X, Y) :-
    posint(X),
    posint(Y),
    X \== Y.

  posint(N) :-
    posint(N, 0).
  posint(_, I) :-
    full_edge_size(N),
    I > N, !,
    fail.
  posint(I, I).
  posint(X, I) :-
    NI is I + 1,
    posint(X, NI).
:- end_object.
\end{lstlisting}

We executed the query: \lstinline{time((path(1, _P), fail))} on graphs with size 8x8 up to 19x19. Table~\ref{tbl:appendix} presents our results; all times are in seconds. For co-SLD we used the coinductive transformation of Logtalk. For co-SLG we used our transformation. Furthermore, we also experimented with the \lstinline{path/2} coinductive predicate in Logtalk being tabled with our rational term support. This can be achieved by just tabling the co-SLD transformed predicate of Logtalk as shown at the comment instruction of the Logtalk code.

\begin{table}[ht]
\caption{Experimental results for the query \lstinline{(path(1, _P), fail)}}
\label{tbl:appendix}
\centering
\begin{tabular}{crrr}
\textbf{Graph Size} & \textbf{co-SLD} & \textbf{co-SLG} & \textbf{Tabled co-SLD} \\\hline
8x8	&	1	&	0.005 & 4 \\
9x9	&	11	&	0.014 & *	\\
10x10	&	126	&	0.035 &	\\
11x11	&     1,724	&	0.053 &	\\
12x12	& > 324,000     &	0.126 &	\\
13x13	&		&	0.287 &	\\
14x14	&		&	0.674 &	\\
15x15	&		&	1.5 &	\\
16x16	&		&	3.5 &	\\
17x17	&		&	8 &	\\
18x18	&		&	17 &	\\
19x19	&		&	39 &	\\
20x20   &               &        * & \\
\end{tabular}
\end{table}

As expected the difference for the path example for co-SLD and co-SLG is significant. This is easy to explain as tabling in this case decreases the complexity of the problem. co-SLD is able to solve up to an 11x11 graph in approximately 1,700 seconds, and it takes more than 1.5 hours which was used as our timeout to solve the 12x12 problem. On the other hand, enumerating all the cyclic paths with co-SLG is speed efficient but very memory consuming. Our system exhausted the 10GB memory that was available to it trying to calculate the 20x20 graph. We also executed experiments by tabling the co-SLD approach. These experiments, which obtained the worst results, are important in order to point out that co-SLG is not simply achieved by tabling a co-SLD transformed predicate. The tabled co-SLD predicate soon went out of memory as it had the task to table both the resulting paths and the coinductive hypothesis. Unfortunately, we could not ignore the coinductive hypothesis from being tabled due to a limitation of YAP's mode-directed tabling~\cite{Santos-12}.

It is well known that using tabling naively can result on a poor performance; for example tabling \lstinline{append/2} predicate as shown in~\cite{Swift-12}. The same applies for using co-SLG over co-SLD naively. There are examples that co-SLD will perform better than co-SLG like the Eratosthenes sieve example.

\label{lastpage}
\end{document}